\newtheorem{definition}{Definition}
\newtheorem{lemma}{Lemma}
\newtheorem{theorem}{Theorem}
\newtheorem{proposition}{Proposition}
\title{Stackelberg Attack on Protocol Fee Governance}
\author{Alexandre Hajjar\thanks{lajarre@buttery.money} \\ Buttery Good Games}
\date{May 2024}
\begin{document}
\maketitle
\setcounter{tocdepth}{2}

\setcitestyle{numbers}

\let\oldabstract\abstract
\let\oldendabstract\endabstract
\renewenvironment{abstract}
  {\renewcommand{\baselinestretch}{1}\oldabstract\noindent\ignorespaces}
  {\oldendabstract}
\begin{abstract}
We establish a Stackelberg attack by Liquidity Providers against Governance of
an AMM, leveraging forking and commitments through a Grim Forker smart contract.
We produce a dynamic, block-by-block model of AMM reserves and trading volume in
the presence of competing forks, derive equilibrium conditions in the presence
of protocol fees, and analyze Stackelberg equilibria with smart contract moves.
\end{abstract}

\section{Introduction}

\subsection{Related work}

Protocol fees are used by various DeFi protocols to allow value capture by their
governance smart contracts \cite{UniswapV3, morphoblue}. Fees can then be used
to produce dividends for token holders or to accumulate treasury value
\cite{gfxuniswap} thus increasing the governance token's value and justifying
investability in the protocol. Notably, major protocols like Uniswap haven't
activated the protocol fee yet but are working towards it
\cite{UniswapFeeProposal}.

Protocol fees are taken from some protocol participants' revenue. In Uniswap's
case, protocol fees are deducted from Liquidity Provider (LP) fees
\cite{UniswapV3}. This dynamic creates an adversarial setting where LPs have
diverging interests from token holders. Given this occurs in a blockchain
setting, it is legitimate to ask if and how LPs can leverage smart contracts to
modify the strategic equilibrium to their advantage.

Various models in the literature account for a player whose strategy leverages
commitments to alter the game's outcome. \textit{Stackelberg competition games}
introduced in \cite{VonStackelberg} model competitive games with a leader and a
follower, where the leader commits to a move. \cite{CommitmentFolk} formalizes
\textit{commitment games}, where players can commit to playing a fixed strategy.
\textit{Game mining}, introduced by \cite{GameMining}, describes commitment
games whose outcomes can be altered by one player committing publicly to produce
conditional side-payments. 

Blockchains introduce a new setting where non-cooperative players can achieve
commitments, thus producing novel game equilibria. \cite{VirgilGriffith} uses
the term \textit{game warping} to describe how non-cooperative games' equilibria
are modified by players interacting with smart contracts.
\cite{Ramirez2023GameM} studies how \textit{commitment devices}, including
blockchain-enabled smart contracts as binding contracts, can be used by players,
particularly \textit{game manipulators} external to the original game, to alter
payoffs in their favor. \cite{GThBlockchain} models Stackelberg equilibria where
smart contract moves are introduced in a game.

Smart contract commitments have a history of being leveraged to create
theoretical attacks in the blockchain setting, like in
\cite{cryptoeprint:2017/230}. More recently, literature has developed around
\textit{Stackelberg attacks} using smart contracts as commitment devices to
induce new Stackelberg equilibria to the attacker's profit \cite{Huo_2023}
\cite{landis2023stackelberg}. \cite{Landis2023WhichGA} introduces
\textit{Stackelberg resilience} as a property of games whose subgame perfect
equilibria are not changed by sequential commitments.

Stackelberg attacks have been studied only sparsely in the context of smart
contract governance. \cite{Huo_2023} produces conditions on Stackelberg
equilibria between stakeholders of a stablecoin system, including governance as
a player and smart contract parameter setting as the action space. Such models
are yet to be studied in the context of AMM protocol fee governance.

\subsection{This paper}

In this paper, we aim to produce a framework for deriving Stackelberg equilibria
between, on one hand, LPs allocating reserves among an AMM and a fork of it, and
on the other hand, the AMM Governance as a single player.

The analysis will be limited to a single Constant Product Market Makers (CPMM)
pool like Uniswap v2, of which we briefly introduce the core components and
refer to \cite{angeris2021analysis} for further details. However, note that the
results can be generalized to multiple pools and to Uniswap v3 tick-by-tick.
While the model focuses on CPMMs, it can be extended to a wider class of AMMs
whose trader and LPs allocation functions display similar properties.

The second section will formalize a game-theoretic model of competition among
two forks of the same AMM pool, where we observe the rational behavior of AMM
users (LPs, traders, and Governance). The third section will then study these
competition games in a dynamic system setting, block-by-block, with protocol fee
rate and initial conditions as parameters. This will allow observing a natural
monopoly where the leading AMM tends to win the whole market, reflecting the
idea that network effects apply.

Finally, in the fourth section, we introduce a meta-game played by LPs against
Governance, where they can commit to strategies that automatically deploy their
reserves among an AMM and a feeless fork of it, depending on conditions. The
intuition is that by colluding, LPs will threaten to reallocate most of the
liquidity to the feeless fork, thus gaining enough bargaining power to deter
Governance from increasing the protocol fee. We introduce the Grim Forker
contract as an instance of a Stackelberg attack by LPs to that effect. This
contract produces a new Stackelberg equilibrium to the competition game,
allowing us to define which protocol-fee-setting strategies Governance should
rationally follow.

\section{Model}

\subsection{Automated Market Maker}

We define an Automated Market Maker as being a 2-token pool, similar to Uniswap
v2.

\begin{definition}[Automated Market Maker (AMM)]
An Automated Market Maker (AMM) is defined by the tuple $(\alpha, \beta, R, V,
\gamma, \phi)$, where:
\begin{itemize}
    \item $\alpha$ and $\beta$ are the token types,
    \item $R$ represents the amounts in reserves,
    \item $V$ is the traded volume over a period of time,
    \item $1 - \gamma$ is the percentage fee,
    \item $\phi$ is the protocol fee.
\end{itemize}
\end{definition}

By convention, $R$ denotes the reserves in $\alpha$, and $V$ denotes the volume
in $\alpha$. For the rest of this paper, any quantity that denotes an asset will
use the same convention.

In the following allocation games, we will consider two AMMs which are in the
most direct kind of competition, or \textit{Competing AMM Forks}.

\begin{definition}[Competing AMM Forks]
  Two AMMs $(\alpha, \beta, R_a, V_a, \gamma, \phi_a)$ and $(\alpha, \beta, R_a,
  V_a, \gamma, \phi_a)$ are competing when:
  \begin{itemize}
    \item they share the same token types $\alpha$ and $\beta$,
    \item they share the same fee level $\gamma$,
    \item they are operated by the same smart contract code on the same
      blockchain, thus resulting in identical blockchain transaction costs for
      performing a trade.
  \end{itemize}
\end{definition}

In the rest of this paper, when considering Competing AMM Forks, $\mathcal{A}_x$ will
systematically denote $(\alpha, \beta, R_x, V_x, \gamma, \phi_x)$.

For the purposes of analyzing equilibria under competition, we will want to
observe two competing AMM forks where one is a clear market leader with the
following condition.

\begin{definition}[Market Leader Condition]
  Two Competing AMM Forks $\mathcal{A}_a$ and $\mathcal{A}_b$ verify the Market
  Leader Condition when $V_a > V_b$ and $R_a > R_b$.
\end{definition}

This will also enable our analysis to focus on cases where the ratio of reserves
always starts strictly higher than $50\%$ in favor of the larger AMM.

\subsection{Trader volume allocation}

We now model traders and LPs so as to reason about how they will behave when
presented with the option to allocate their trades or reserves to any of two
competing AMM forks.

We assume traders to be profit maximizing. Their net value received from a trade
on a given AMM is modeled as $\textsf{TradeValue} - \textsf{Fees} -
\textsf{PriceImpact} - \textsf{TxCost}$, with:
\begin{itemize}
    \item $\textsf{TradeValue}$: the payoff from trading, specific to the
      trader's preferences and independent of the AMM,
    \item $\textsf{Fees}$: the AMM trading fees,
    \item $\textsf{PriceImpact}$: the cost resulting from the price impact of
      the trade, specific to the AMM reserves and to the size of the trade,
    \item $\textsf{TxCost}$: the blockchain transaction costs, specific to the
      AMM, but equal when comparing Competing AMM Forks.
\end{itemize}

The $\textsf{PriceImpact}$ term can be derived from \cite{angeris2021analysis}
equation (7) where we consider the price impact to be equal to the price gap
compared to a perfect market with infinite reserves. For a given AMM
$\mathcal{A}_a$:

$$
\textsf{PriceImpact} = m_u\gamma^{-1}\left(\frac{(\delta\Delta)^2}{R_a} + O\left(\frac{(\delta\Delta)^2}{R_a^2}\right) \right)
$$

with $m_u$ the AMM price of coin $\alpha$ and $\Delta$ the amount of coin
$\alpha$ traded.

The $\textsf{TxCost}$ term plays an important role in that it contributes an
incentive for traders to use a single AMM for their trades rather than spread
them freely over multiple AMMs. This incentive will impact the equilibrium
condition and result in network effects appearing.

This term assumes the traders will pay the AMM tx costs for each interaction
independently. Other options would exist like using Dex aggregators which would
produce a different cost profile, but it is reasonable to assume any such cost
profile would always result in an incentive to use a single smart contract
interaction rather than multiple ones.

\begin{definition}[Trader Allocation Game]
  A trader aims to exchange tokens $\alpha$ for tokens $\beta$, with the choice to
  allocate trading volume $\Delta$ among two competing AMM forks $\mathcal{A}_a$ and
  $\mathcal{A}_b$. Let $\delta \in [0, 1]$ represent the proportion allocated to
  $\mathcal{A}_a$, with the remaining fraction $(1-\delta)$ allocated to
  $\mathcal{A}_b$.

  The trader's utility is defined as

  \[
  \begin{gathered}
    U_t(\Delta, \delta) =
    \textsf{TradeValue}(\Delta) \\
    - \textsf{Fees}(\Delta)   \\
    - m_u\gamma^{-1}\left(\frac{(\delta\Delta)^2}{R_a} + \frac{((1-\delta)\Delta)^2}{R_b} + O\left(\frac{(\delta\Delta)^2}{R_a^2}\right) + O\left(\frac{((1-\delta)\Delta)^2}{R_b^2}\right) \right) \\
    - 
    \begin{cases}
      c_0 & \text{if } \delta \in \{0,1\} \\
      2c_0 & \text{otherwise}
    \end{cases}
  \end{gathered}
  \]
  where $c_0$ the transaction cost for performing a swap on either of the two
  AMMs, considered constant, and $m_u$ the ratio between reserves in token $\beta$
  and token $\alpha$, considered equal across AMMs (no-arbitrage condition).
\end{definition}

Note that the blockchain transaction costs are doubled if the allocation is
split.

As we are only looking for equilibria in an allocation game, let's use a
simplified utility keeping only terms that vary with the allocation proportions
and simplifying out fixed multiplicative terms. Let's also consider only small
enough trades so that quadratic terms are negligible.

\begin{definition}[Small Trader Allocation Simplified Utility]
Assuming two competing AMM forks, we define the trader's simplified utility,
assuming $\Delta$ fixed, as

  \begin{equation}
u_t(\delta) = - \frac{\delta^2}{R_a} - \frac{(1-\delta)^2}{R_b} - 
    \begin{cases}
      c & \text{if } \delta \in \{0,1\} \\
      2c & \text{otherwise}
    \end{cases}
  \end{equation}
with $c = c_0 \gamma m_u^{-1} \Delta^{-2}$.
\end{definition}

This is a concave function on $(0, 1)$ discontinuity at $\delta = 0$ and $\delta = 1$.
We can easily see that the maximum on $(0,1)$ will be the proportions of
reserves.

\begin{lemma}
  $$
  \underset{(0,1)}{\arg\max}\ u_t = \frac{R_a}{R_a + R_b}
  $$
\end{lemma}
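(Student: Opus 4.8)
The plan is to find the unconstrained maximum of $u_t$ on the open interval $(0,1)$ by standard calculus. On this interval the discontinuous transaction-cost term takes the constant value $2c$, so it does not affect the location of the maximizer and can be ignored for the optimization. The function to maximize is therefore $f(\delta) = -\delta^2/R_a - (1-\delta)^2/R_b$.

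First I would verify concavity: the second derivative is $f''(\delta) = -2/R_a - 2/R_b < 0$, confirming that $f$ is strictly concave on $(0,1)$, so any interior critical point is the unique global maximizer. This justifies restricting attention to the first-order condition.

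Next I would compute the first derivative and set it to zero. We have $f'(\delta) = -2\delta/R_a + 2(1-\delta)/R_b$, and solving $f'(\delta) = 0$ gives $\delta/R_a = (1-\delta)/R_b$, i.e. $\delta R_b = (1-\delta) R_a$, which rearranges to $\delta(R_a + R_b) = R_a$, hence $\delta = R_a/(R_a + R_b)$.

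The only remaining point to check is that this critical point genuinely lies in the open interval $(0,1)$, so that the interior maximizer coincides with the stated value rather than being attained at a boundary. Since $R_a, R_b > 0$, the ratio $R_a/(R_a+R_b)$ is strictly between $0$ and $1$, so this is immediate. I do not anticipate a genuine obstacle here; the main subtlety worth stating explicitly is simply that the piecewise cost term is locally constant on $(0,1)$ and thus irrelevant to the interior argmax, which is why the lemma restricts the domain to $(0,1)$ in the first place.
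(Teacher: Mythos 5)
Your proof is correct and matches the paper's (largely implicit) argument: the paper simply notes that $u_t$ is concave on $(0,1)$ and that the maximizer is the reserve proportion, and your proposal just makes the standard calculus explicit --- constant cost term $2c$ on the interior, strict concavity via $f''(\delta) = -2/R_a - 2/R_b < 0$, and the first-order condition yielding $\delta = R_a/(R_a+R_b) \in (0,1)$. Nothing is missing.
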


Thus we observe that under the Market Leader Condition, some traders will prefer
to allocate all their trade volume to the leading AMM rather than seek 
the maximum described above, as long as the blockchain transaction costs are
large enough to them. This will be further studied below in our equilibria
analysis.

Let's denote $\sigma$, the proportion of such traders. We will assume that this
proportion is constant throughout Trader Allocation Games under the Market
Leader Condition in this paper. This reasonable assumption will be enough to
enable network effects.

\begin{definition}[Sensitive Traders Proportion]
  In any Trader Allocation Game under the Market Leader Condition, $\sigma$ in
  $[0,1]$ is defined so that:
  \begin{itemize}
    \item $\sigma$ is the proportion of traders for whom
      $\underset{[0,1]}{\arg\max}\ u_t = 1$,
    \item $(1 - \sigma)$ is the proportion of traders for whom
      $\underset{[0,1]}{\arg\max}\ u_t = \frac{R_a}{R_a + R_b}$.
  \end{itemize}
\end{definition}

\subsection{LP reserves allocation}

\begin{definition}[LP Allocation Game]
  A Liquidity Provider (LP) aims to allocate reserves (assumed only in token
  $\alpha$ for simplicity), with the option to allocate a total of $r$ among
  two competing AMM forks $\mathcal{A}_a$ and $\mathcal{A}_b$. Let $r_a \in [0,
  r]$ represent the amount allocated to $\mathcal{A}_a$, with the
  remaining $r_b = r - r_a$ allocated to $\mathcal{A}_b$.

  The LP's utility is given by 

  \[
    U_l(r, r_a) = (1-\gamma-\phi_a)V_a\frac{r_a}{R_a+r_a} +
    (1-\gamma-\phi_b)V_b\frac{r-r_a}{R_b+r-r_a} 
  \]
\end{definition}

$U_l$ is based on the definition of LP returns and protocol fee in
\cite{UniswapV3}.

This results in the Taylor expansion:

\[
    U_l(r, r_a) = (1-\gamma-\phi_a)V_a\frac{r_a}{R_a} +
  (1-\gamma-\phi_b)V_b\frac{r-r_a}{R_b} + O\left(\frac{r_a^2}{R_a^2}\right) + O\left(\frac{(r-r_a)^2}{R_b^2}\right)
\]

Let's thus approximate the utility for small LPs, who have reserves negligible
compared to the total reserves of any of the two AMMs.

\begin{definition}[Small LP Utility]
  \begin{equation}
  u_l(r, r_a) = (1-\gamma-\phi_a)V_a\frac{r_a}{R_a} + (1-\gamma-\phi_b)V_b\frac{r-r_a}{R_b}
\end{equation}
\end{definition}

Leveraging $u_t$ and $u_l$ will allow us to derive equilibria for both
allocation games.

\subsection{Aggregate allocation}

We want to further study the evolution through time of competing AMM forks
$\mathcal{A}_a$ and $\mathcal{A}_b$. For that, we want to observe the evolution
of $V_a/V$ and $R_a/R$.

This will enable analyzing which parameters influence equilibria and which
conditions produce (resp. prevent) a self-reinforcing network effect.

To that end, we define two sequences that capture the dynamic of repeated games
through blocks and model the occurrence of trades and reserves allocation as a
series of sequential allocation games within a block

We further want to assume that the aggregate volume and reserve allocated by
traders at any time step are constant, so as to focus our inquiry on the
dynamics between the two AMMs.

\begin{definition}[Block Allocation Game]
  Blocks are denoted by their indexes $i \in \mathbb{N}$.
  
  We consider $\mathcal{A}_a$ and $\mathcal{A}_b$ which verify:

  \begin{itemize}
    \item the total allocated trade volume per block, $V = (V_a)_i + (V_b)_i$ is
      constant for all $i\geq0$,
    \item the total amount of reserves $R = (R_a)_i + (R_b)_i$ is constant, for
      all $i\geq0$,
  \end{itemize}

  with AMM notation expanded to include $((V_{\cdot})_i)_{i\geq0}$ the trade
  volume per block and $((R_{\cdot})_i)_{i\geq0}$ the reserves amount per block.

  A Block Allocation Game is constructed by sub-games for each blcok $i\geq0$ in
  the following order:

  \begin{itemize}
    \item Block Traders Allocation Game: repeated sequential Trader Allocation
      Games so that the sum of trades allocated is equal to the volume per block
      $V$,
    \item Block LPs Allocation Game: repeated sequential LP Allocation Game so
      that the sum of reserves allocated is equal to $R$.
  \end{itemize}
\end{definition}

From here onward, any mention of AMM forks will refer to this definition.

Note that the entirety of reserves, $R$, is allocated anew by LPs at each block.
This simplifying measure will enable us focusing on how equilibria emerge, even
if sacrificing modeling accuracy.

Taking into account multiple forms of transaction costs for LPs would result in
only a fraction of $R$ being reallocated at each round, which makes the dynamic
system move slower, but arguably still in the same direction.

More generally, picking a different definition for Block Allocation Games might
yield slightly different results down the line, thus might be worth exploring to
refine the model.

To further simplify, we also assume that all traders playing the Block Traders
Allocation Games make small-enough trades. Hence, for the purpose of analyzing
equilibria, we will define their utility as the Small Trader Allocation
Simplified Utility $u_t$. Equivalently, we assume that all LPs entering any
Block LPs Allocation Game have reserves negligible compared to the total
reserves so we can define their utility as the Small LP Utility $u_l$.

Note that these simplifications will only matter as a way to calculate
equilibria, thus need only be good local approximations. For example,
if we assume that in an average Ethereum block a large proportion of LP reserves
allocation are of the small kind, these will be robust.

Nevertheless, we might be losing modeling accuracy with respect to blocks where
a single large LP is making a large move. Such events could produce sudden
variations, which could be modeled as stochastic terms. It is outside of the
scope of this simple model and could be suggested as a refinement.

Next, we define our dynamic system based on allocation ratios rather than volume
and reserves amounts, leveraging the Block-Competing AMM Forks constraints on $V$ and
$R$ being constant thus unnecessary to include in our model.

\begin{definition}[Allocation Ratios]
Given two Block-Competing AMM forks $\mathcal{A}_a$ and $\mathcal{A}_b$, along
with an initial block indexed $0$ (by convention), $(T_i)_{i\geq0}$ and
$(L_i)_{i\geq0}$ are sequences with values in $[0,1]$ defined by the recurrence relations:

\begin{itemize}
  \item $T_0 = \frac{(V_a)_0}{(V_a)_0 + (V_b)_0}$
  \item $L_0 = \frac{(R_a)_0}{(R_a)_0 + (R_b)_0}$
  \item $T_i = \textsf{BlockTradersAllocation}(T_{i-1}, L_{i-1})$
  \item $L_i = \textsf{BlockLPsAllocation}(T_i, L_{i-1})$
\end{itemize}
where
\begin{itemize}
  \item $\textsf{BlockTradersAllocation}$ represents the outcome of the Block
    Traders Allocation Game as a function of previous block state
  \item $\textsf{BlockLPsAllocation}$ represents the outcome of the Block LPs
    Allocation Game as a function of the outcome of Block Traders Allocation
    Game and of previous block state.
\end{itemize}
\end{definition}

$(T_i)$ represents the aggregate swap volume allocation ratio to
$\mathcal{A}_a$. $(L_i)$ represents the aggregate the aggregate reserves
allocation ratio to $\mathcal{A}_a$.

\section{Equilibria analysis}

\subsection{No fee scenario: network effects}

If two AMM forks follow the Market Leader Condition, we want to
observe how network effects apply and make the leader a monopoly. For that, we
analyze the dynamic system described by $(L_i)$ and $(T_i)$.

We will assume that $(L_i)>0.5$ and $(T_i)>0.5$ throughout this section unless
otherwise specified.

\begin{proposition}[Block Traders Allocation Rule]
  In a Block Allocation Game under the Market Leader Condition,
  $$
  \forall i > 0, T_i = \sigma + (1-\sigma)L_{i-1}
  $$
\end{proposition}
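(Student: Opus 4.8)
The plan is to evaluate the outcome of the Block Traders Allocation Game by aggregating the optimal choices of the individual traders that compose it, using the characterization of those choices already established. First I would fix a block index $i>0$ and observe that when the Block Traders Allocation Game is played, the reserves still carry the ratio produced at the end of block $i-1$; that is, $\frac{R_a}{R_a+R_b}=L_{i-1}$. This is the state on which each trader bases its allocation decision, and since every trader is assumed to make a small-enough trade, its relevant utility is the Small Trader Allocation Simplified Utility $u_t$ evaluated at this reserves ratio.

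Next I would invoke the Lemma together with the Sensitive Traders Proportion definition to classify the traders. By the Lemma the interior maximizer of $u_t$ is $\frac{R_a}{R_a+R_b}=L_{i-1}$, so comparing this interior optimum against the corner value $\delta=1$ partitions the population exactly as in the definition of $\sigma$: a proportion $\sigma$ optimally set $\delta=1$, routing all their volume to the leader $\mathcal{A}_a$, while the remaining proportion $(1-\sigma)$ optimally set $\delta=L_{i-1}$. Because the total block volume $V$ is constant and $T_i$ is by definition the share of that volume routed to $\mathcal{A}_a$, the aggregate ratio is the volume-weighted average of these two allocation levels, giving $T_i=\sigma\cdot 1+(1-\sigma)\cdot L_{i-1}=\sigma+(1-\sigma)L_{i-1}$.

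The step I expect to require the most care is justifying that a single reserves ratio $L_{i-1}$ governs every trade in the block even though the Block Traders Allocation Game is a sequence of trades, each of which nominally moves the reserves. Here I would lean on the small-trade assumption: each trade is negligible relative to $R_a$ and $R_b$, so to first order the sequential trades do not perturb the ratio $\frac{R_a}{R_a+R_b}$ away from $L_{i-1}$, and the classification above applies uniformly across the block. A secondary point to state explicitly is that $\sigma$, defined as a proportion of traders, coincides with the corresponding proportion of volume, which holds under the paper's implicit homogeneity of trade sizes (equivalently, by reading $\sigma$ as volume-weighted); with that identification the convex-combination computation closes the argument.
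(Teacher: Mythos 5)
Your proposal is correct and follows essentially the same route as the paper, whose entire proof is a one-line appeal to the Sensitive Traders Proportion definition applied to each Trader Allocation Game in the block; you simply make explicit the aggregation $T_i = \sigma\cdot 1 + (1-\sigma)\cdot L_{i-1}$ that the paper leaves implicit. Your two flagged subtleties --- that the reserves ratio $L_{i-1}$ governs all trades in the block, and that the trader-count proportion $\sigma$ must be read as a volume proportion --- are genuine gaps in the paper's terse argument that your small-trade and homogeneity observations patch correctly.
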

\begin{proof}
  Follows from the Sensitive Traders Proportion definition applied to every
  Trader Allocation Game in any Block Traders Allocation Games.
\end{proof}

\begin{proposition}
  In a Block Allocation Game under the Market Leader Condition with $\phi_a =
  \phi_b = 0$
  $$
  \forall i>0, L_i = T_i
  $$
\end{proposition}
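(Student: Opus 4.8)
The plan is to show that the equilibrium of the Block LPs Allocation Game equalizes the marginal liquidity return across the two forks, and that this equalization condition is precisely $L_i = T_i$.

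First I would substitute $\phi_a = \phi_b = 0$ into the Small LP Utility, obtaining
\[
u_l(r, r_a) = (1-\gamma)V_a\frac{r_a}{R_a} + (1-\gamma)V_b\frac{r-r_a}{R_b}.
\]
This is affine in $r_a$, with slope $(1-\gamma)\left(\frac{V_a}{R_a} - \frac{V_b}{R_b}\right)$. Consequently each individual small LP, maximizing over $r_a \in [0, r]$, plays a corner solution: it allocates its entire reserve to $\mathcal{A}_a$ when $\frac{V_a}{R_a} > \frac{V_b}{R_b}$ and to $\mathcal{A}_b$ when the inequality is reversed.

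Next I would analyze the repeated sequential structure of the Block LPs Allocation Game. Once the Block Traders Allocation Game has concluded, the block volumes $(V_a)_i = V T_i$ and $(V_b)_i = V(1 - T_i)$ are fixed, and since the full reserve $R$ is reallocated anew, the previous-block ratio $L_{i-1}$ is washed out and only the volumes enter. Each incremental LP allocation increases the reserve of the higher-return fork, which strictly decreases that fork's marginal return $V_{\cdot}/R_{\cdot}$. The sequential process therefore drives the two marginal returns toward each other: at any aggregate allocation where they remain unequal, the next LP strictly prefers the higher-return fork, so no such allocation can be an equilibrium. The unique equilibrium is the interior allocation at which the marginal returns coincide,
\[
\frac{V_a}{R_a} = \frac{V_b}{R_b}.
\]

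Finally I would rearrange this equality. Cross-multiplying gives $V_a R_b = V_b R_a$, hence $\frac{R_a}{R_b} = \frac{V_a}{V_b}$, and since equal ratios are preserved under normalization,
\[
L_i = \frac{R_a}{R_a + R_b} = \frac{V_a}{V_a + V_b} = T_i,
\]
which is the claim. The main obstacle is the middle step: rigorously justifying that the repeated sequential game of corner-playing small LPs converges to the interior marginal-return-equalizing allocation rather than collapsing to a corner. This rests on the Market Leader Condition, which ensures both $(V_a)_i$ and $(V_b)_i$ are strictly positive, so each marginal return $V_{\cdot}/R_{\cdot} \to \infty$ as the corresponding reserve tends to $0$; neither corner can then be an equilibrium, and the interior fixed point is the unique stable outcome.
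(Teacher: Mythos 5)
Your proof is correct and lands on the same key identity as the paper --- equalization of marginal returns, $V_a/R_a = V_b/R_b$, rearranged to $L_i = T_i$ --- but you justify it more carefully than the paper does. The paper's proof is a one-liner asserting that the equilibrium condition is $\frac{\partial u_l}{\partial r_a} = 0$; taken literally for a single small LP this is degenerate, since, as you correctly observe, $u_l$ is affine in $r_a$ and each individual LP therefore plays a corner. You fill exactly this gap: the equalization condition emerges not as an individual first-order condition but as the aggregate rest point of the sequential game, where any allocation with unequal marginal returns invites a profitable corner deviation, and corners of the aggregate allocation are excluded because $V_\cdot/R_\cdot$ diverges as the corresponding reserve vanishes. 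One small caveat: the Market Leader Condition only gives $V_a > V_b \geq 0$, not strict positivity of both volumes, so your closing claim is slightly overstated; however, in the degenerate case $T_i = 1$ every LP allocates to $\mathcal{A}_a$ and $L_i = 1 = T_i$ holds trivially (and the paper's standing assumption $T_i, L_i > 0.5$ in that section sidesteps the issue anyway). In short, your argument is the rigorous reading of the paper's terse FOC, at the cost of the extra equilibrium-selection reasoning; the paper's version buys brevity by treating the indifference condition formally.
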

\begin{proof}
  Follows from observing that the equilibrium condition for any LP Allocation
  Game in a Block LPs Allocation Game is $\frac{\partial u_l}{\partial r_a} = 0$
  which is equivalent to $\frac{R_a}{V_a} = \frac{R_b}{V_b}$ hence $\frac{R_a}{R} =
  \frac{V_a}{V}$.
\end{proof}

\begin{proposition}
  In a Block Allocation Game under the Market Leader Condition with $\phi_a =
  \phi_b = 0$
  $$
  \lim_{i\to\infty} L_i = 1
  $$
\end{proposition}
\begin{proof}
  $L_i = \sigma + (1-\sigma)L_{i-1}$ thus $(L_i)$ is strictly increasing. And it
  is bounded by 1.
\end{proof}

\subsection{Leader with fee scenario: equilibrium condition on fee}

If the market leading AMM enables a protocol fee $\phi_a > 0$ while the fork
doesn't, we expect that some players will allocate less to the leading AMM. We
want to look into when it produces a non-monopoly equilibrium.

In this sub-section, we assume $\mathcal{A}_a$'s protocol fee is permanently set
to $\phi_a$.

\begin{proposition}
  In a Block Allocation Game under the Market Leader Condition with
  $\phi_a > 0$ and $\phi_b = 0$

  $$
  \forall i>0, L_i = \frac{1-\gamma-\phi_a}{1-\gamma-\phi_a T_i}T_i
  $$
\end{proposition}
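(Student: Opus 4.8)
The plan is to characterize the within-block equilibrium of the Block LPs Allocation Game by the balancing of marginal LP returns across the two forks, and then translate that balance into a relation between $L_i$ and $T_i$ by substituting the allocation ratios and solving algebraically. I would follow the same logic already used (but left implicit) in the preceding proposition, now carrying the extra fee term $\phi_a$.

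First I would establish the equilibrium condition. For a single small LP the utility $u_l(r, r_a)$ is affine in its own allocation $r_a$, with slope
\[
\frac{\partial u_l}{\partial r_a} = (1-\gamma-\phi_a)\frac{V_a}{R_a} - (1-\gamma)\frac{V_b}{R_b},
\]
using $\phi_b = 0$. Hence each LP sends its entire stake to whichever fork currently offers the larger marginal return. As the sequential LP Allocation Games of the block proceed, reserves accumulate on the more attractive fork; since its reserve sits in the denominator of its marginal return, that return falls, and the process halts only when the two marginal returns coincide. The aggregate equilibrium state of the block is therefore exactly the one for which $\frac{\partial u_l}{\partial r_a} = 0$, i.e.
\[
(1-\gamma-\phi_a)\frac{V_a}{R_a} = (1-\gamma)\frac{V_b}{R_b},
\]
which specializes to the no-fee condition $\frac{R_a}{V_a} = \frac{R_b}{V_b}$ when $\phi_a = 0$.

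Next I would pass to allocation ratios. Writing $R_a = L_i R$, $R_b = (1-L_i)R$, $V_a = T_i V$ and $V_b = (1-T_i)V$ (legitimate because $R$ and $V$ are constant across blocks, and $T_i$ is already fixed by the Block Traders Allocation that precedes the LP subgame), the common factor $V/R$ cancels and the condition becomes
\[
(1-\gamma-\phi_a)\frac{T_i}{L_i} = (1-\gamma)\frac{1-T_i}{1-L_i}.
\]
Cross-multiplying and expanding, the two $T_i L_i$ cross-terms combine into a single term — their coefficients differing exactly by $\phi_a$, since $(1-\gamma) - (1-\gamma-\phi_a) = \phi_a$ — leaving a linear equation in $L_i$. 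Solving it gives
\[
L_i = \frac{(1-\gamma-\phi_a)\,T_i}{(1-\gamma) - \phi_a T_i} = \frac{1-\gamma-\phi_a}{1-\gamma-\phi_a T_i}\,T_i,
\]
as claimed.

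The main obstacle is not the algebra — a short linear solve once the marginal-return balance is written down — but the justification of the equilibrium condition itself. Because each individual small LP faces a strictly linear, hence corner-optimizing, utility, no single LP's first-order condition vanishes; the equality of marginal returns is an \emph{emergent} property of the aggregate state produced by the sequence of sequential allocations. I would therefore want to argue that this sequential best-response dynamic drives the aggregate state to the interior balance point, and to confirm that under the Market Leader Condition with $L_i, T_i > 0.5$ and $\phi_a$ small enough the equilibrium is genuinely interior, so that the stated formula applies rather than a corner solution at $L_i \in \{0,1\}$.
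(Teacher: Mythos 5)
Your proposal is correct and takes essentially the same route as the paper, whose one-line proof simply invokes the first-order condition $\frac{\partial u_l}{\partial r_a} = 0$ on the Small LP Utility of equation (2) with $\phi_b = 0$; you merely spell out the substitution $R_a = L_i R$, $V_a = T_i V$ and the short linear solve that the paper leaves implicit. Your closing remark—that the marginal-return balance is an emergent aggregate condition of the sequential Block LPs Allocation Game rather than any individual small LP's interior optimum, since each LP's utility is linear in $r_a$—is a sound and slightly more careful justification than the paper itself provides.
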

\begin{proof}
  Follows from deriving the maximum $u_l$ as a function of $r_a$, shown in
  equation (2), with $\phi_b=0$.
\end{proof}

We can observe that, as $T_i \in [0,1]$, for $i>0$ we have $L_i < T_i$. This
corroborates the idea that introducing the fee in $\mathcal{A}_a$ reduces its
attractiveness for LPs and thus hinders the network effects.

\begin{proposition}[Leader Fee Equilibrium]
  In a Block Allocation Game under the Market Leader Condition with
  $\phi_a > 0$ and $\phi_b = 0$, and assuming that $T_0$ is not null, the
  dynamic system will be at equilibrium either

  \begin{itemize}
    \item when $T_i = 1$,
    \item when $\phi_a = \sigma(1 - \gamma)T_i^{-1}$.
  \end{itemize}
\end{proposition}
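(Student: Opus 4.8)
The plan is to read ``equilibrium'' as a stationary point of the coupled recurrence, imposing $T_i = T_{i-1} =: T^\ast$ and $L_i = L_{i-1} =: L^\ast$ simultaneously. The two stationary equations are already supplied by earlier results: the Block Traders Allocation Rule gives $T^\ast = \sigma + (1-\sigma)L^\ast$, and the preceding proposition (the case $\phi_a>0$, $\phi_b=0$) gives $L^\ast = \frac{1-\gamma-\phi_a}{1-\gamma-\phi_a T^\ast}\,T^\ast$. First I would eliminate $L^\ast$: inverting the traders rule yields $L^\ast = \frac{T^\ast-\sigma}{1-\sigma}$ (valid when $\sigma\neq 1$; the degenerate case $\sigma=1$ forces $T^\ast=1$ directly, landing in the first announced branch), which reduces the system to a single equation in the unknown $T^\ast$.

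Next I would substitute this expression into the LP relation and clear denominators by cross-multiplying. This is legitimate because $1-\gamma-\phi_a T^\ast \geq 1-\gamma-\phi_a > 0$ under the natural assumption of positive net LP returns, so the denominator never vanishes and no spurious roots are introduced. Expanding both sides, I expect the quadratic-in-$T^\ast$ terms together with several of the linear terms to cancel pairwise, leaving a compact quadratic in $T^\ast$.

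The crux of the argument is the factorization of that quadratic, which I anticipate to take the form $(T^\ast-1)\bigl(\phi_a T^\ast - \sigma(1-\gamma)\bigr)=0$. The two factors give exactly the two cases: either $T^\ast = 1$ (the monopoly fixed point), or $\phi_a T^\ast = \sigma(1-\gamma)$, i.e.\ $\phi_a = \sigma(1-\gamma)\,(T^\ast)^{-1}$, which is the stated condition once $T^\ast$ is identified with the equilibrium value $T_i$. The hypothesis that $T_0$ is not null is what keeps $T^\ast>0$ along the trajectory, so that dividing by $T^\ast$ to reach the $T_i^{-1}$ form is justified and the second branch is well-posed. I expect the main obstacle to be purely algebraic bookkeeping: confirming that the cross-multiplied expression collapses to precisely this product. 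Spotting the common factor $(T^\ast-1)$ is the step that makes the two equilibrium branches separate cleanly rather than leaving an uninterpretable quadratic.
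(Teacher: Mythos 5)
Your proposal is correct and follows essentially the same route as the paper's (terse) proof: treating equilibrium as a fixed point, combining the Block Traders Allocation Rule $T^\ast = \sigma + (1-\sigma)L^\ast$ with the leader-fee LP rule $L^\ast = \frac{(1-\gamma-\phi_a)T^\ast}{1-\gamma-\phi_a T^\ast}$, and solving the resulting equation. The algebra works out exactly as you anticipate --- clearing the denominator (nonvanishing since $\phi_a < 1-\gamma$) yields $\phi_a (T^\ast)^2 - (\phi_a + \sigma(1-\gamma))T^\ast + \sigma(1-\gamma) = 0$, which factors as $(T^\ast - 1)\bigl(\phi_a T^\ast - \sigma(1-\gamma)\bigr) = 0$, giving precisely the two stated branches.
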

\begin{proof}
  $T_0 \neq 0$ induces $T_i \neq 0$ for all $i$. Then the equation follows
  from solving the equilibrium equation constructed by combining the Leader
  Fee Block LPs Allocation Rule with the Block Traders Allocation Rule.
\end{proof}

Note that whenever $T_{i_0}$ = 1, it results that $T_i = S_i = 1$ for any $i >
i_0$ and the system won't move. This notably means that introducing any fee at
this point will not make the system move at all.

Typical values are $\gamma=0.003$ and $\phi=0.0006$ as per \cite{UniswapV3}. We
assume $\sigma=0.2$ to be a reasonable approximation. In these conditions,
$\sigma(1-\gamma)\phi_a^{-1} > 1$ thus making the second equilibrium condition
unachievable. In such cases, equilibrium can only be achieved when $T_i = 1$.

To illustrate aggregate players behavior depending on how the actual protocol fee
is set, let's suppose it is set so that $\phi_a < \sigma(1-\gamma)T_i^{-1}$.
This produces additional interest from LPs to allocate to $\mathcal{A}_a$, thus
making both reserves and volume allocated to $\mathcal{A}_a$ increase through
time until equilibrium is reached.

Also, whenever the fee is high enough, trade volume and reserves allocation to
$\mathcal{A}_a$ will eventually become lower than half of the total, canceling
the Market Leader Condition. It is easy to observe that in that case, a mirror
situation where $\mathcal{A}_b$ becomes market leader in the Block Traders
Allocation Rule thus producing reserved network effects in favor of it. This
will help define an upper bound on $\phi_a$ for this not to happen.

Formally:

\begin{theorem}[Asymptotic Allocation]\label{asymp-alloc}
  Under the same premises as the Leader Fee Equilibrium proposition,
  \begin{itemize}
    \item $2\sigma(1-\gamma) < \phi_a$  implies $(T_i)_{i\geq 0}$ and
      $(L_i)_{i\geq 0}$ are decreasing and $\lim_{i\to\infty} T_i = 0$.
    \item $\sigma(1-\gamma)T_0^{-1}$ < $\phi_a < 2\sigma(1-\gamma)$ implies
      $(T_i)_{i\geq 0}$ and $(L_i)_{i\geq 0}$ are decreasing and
      $\lim_{i\to\infty} T_i = \sigma(1-\gamma)\phi_a^{-1}$
    \item $\phi_a < \sigma(1-\gamma)T_0^{-1}$ implies $(T_i)_{i\geq 0}$ and $(L_i)_{i\geq 0}$ are
      increasing and $\lim_{i\to\infty} T_i = \min(1, \sigma(1-\gamma)\phi_a^{-1})$
  \end{itemize}
\end{theorem}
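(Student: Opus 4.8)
The plan is to collapse the two interleaved rules into a single scalar recurrence and read the asymptotics off the geometry of one map. Substituting the Leader Fee LPs rule $L_{i-1}=h(T_{i-1})$, where $h(T)=\frac{(1-\gamma-\phi_a)T}{(1-\gamma)-\phi_a T}$, into the Block Traders Allocation Rule $T_i=\sigma+(1-\sigma)L_{i-1}$ gives $T_i=f(T_{i-1})$ with
\[ f(T)=\sigma+(1-\sigma)\frac{(1-\gamma-\phi_a)T}{(1-\gamma)-\phi_a T}. \]
Since LP returns are positive we have $1-\gamma-\phi_a>0$, so for $T\le 1$ the denominator satisfies $(1-\gamma)-\phi_a T\ge 1-\gamma-\phi_a>0$ and $f$ is well defined on $[0,1]$. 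A one-line differentiation yields $f'(T)=(1-\sigma)(1-\gamma-\phi_a)\,\frac{1-\gamma}{((1-\gamma)-\phi_a T)^2}>0$, so $f$ (and likewise $h$) is strictly increasing.

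Next I would locate and sign the fixed points. Clearing the denominator, $f(T)-T$ factors as
\[ f(T)-T=\frac{\phi_a\,(T-1)\,(T-T^{\ast})}{(1-\gamma)-\phi_a T},\qquad T^{\ast}:=\frac{\sigma(1-\gamma)}{\phi_a}, \]
which recovers exactly the two equilibria $T=1$ and $T=T^{\ast}$ of the Leader Fee Equilibrium proposition. As the denominator is positive, the sign of $f(T)-T$ equals that of $(T-1)(T-T^{\ast})$: negative for $T^{\ast}<T<1$ and positive for $T<\min(T^{\ast},1)$. Combined with $f$ strictly increasing, this is the textbook cobweb picture: if $f(T_0)<T_0$ the orbit is strictly decreasing, if $f(T_0)>T_0$ it is strictly increasing, and by the monotone convergence theorem it converges to the fixed point it faces. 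Because $L_i=h(T_i)$ with $h$ increasing, $(L_i)$ inherits the monotonicity of $(T_i)$, which disposes of the claims on $(L_i)$ in every case at once.

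Cases two and three then reduce to locating $T_0$ relative to $T^{\ast}$ and $1$. The hypothesis $\phi_a<\sigma(1-\gamma)T_0^{-1}$ is precisely $T_0<T^{\ast}$, giving a strictly increasing orbit; its limit is the least fixed point above $T_0$, equal to $T^{\ast}$ when $T^{\ast}\le 1$ and to $1$ when $\phi_a<\sigma(1-\gamma)$ forces $T^{\ast}>1$ (here $f(1)=1$ with $f$ increasing keeps the orbit inside $[T_0,1]$), that is $\min(1,\sigma(1-\gamma)\phi_a^{-1})$. The hypothesis $\sigma(1-\gamma)T_0^{-1}<\phi_a<2\sigma(1-\gamma)$ is precisely $\tfrac12<T^{\ast}<T_0$, giving a strictly decreasing orbit trapped above $T^{\ast}>\tfrac12$, hence convergent to $T^{\ast}=\sigma(1-\gamma)\phi_a^{-1}$, with $\mathcal{A}_a$ retaining the Market Leader Condition throughout.

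The hard part is case one, $\phi_a>2\sigma(1-\gamma)$, equivalently $T^{\ast}<\tfrac12$, where the orbit leaves the Market Leader regime and the trader rule itself must change. I would argue in two phases. While $\mathcal{A}_a$ still leads, $f$ governs and, since the only fixed points are $T^{\ast}<\tfrac12$ and $1>T_0$, a strictly decreasing orbit cannot converge inside $[\tfrac12,T_0]$; hence it must fall below $\tfrac12$ in finitely many steps, and $L_i=h(T_i)<T_i$ drags the reserve ratio below $\tfrac12$ as well, so $\mathcal{A}_b$ becomes the full market leader. By the symmetry of the Sensitive Traders rule the trader allocation then reads $T_i=(1-\sigma)L_{i-1}$, so the governing map becomes $g(T)=(1-\sigma)h(T)$; since $h(T)<T$ for $T<1$, we get $g(T)<(1-\sigma)T<T$, which makes the $\mathcal{A}_b$-leader region absorbing and forces $T_i\to 0$ geometrically, with $L_i=h(T_i)\to 0$. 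The delicate point, and the one I expect to cost the most care, is exactly this hand-off: justifying that the mirror rule is the correct one the instant leadership flips and that the contraction estimate $g(T)<(1-\sigma)T$ prevents the orbit from oscillating back across the threshold.
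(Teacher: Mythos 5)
Your proof is correct and follows essentially the same route as the paper's own (very terse) argument, which likewise rests on monotone dynamics with fixed points at $1$ and $\sigma(1-\gamma)\phi_a^{-1}$, explicitly \emph{assumes} the Market Leader Condition persists in the second and third cases, and handles the first case by switching to the mirrored rule $T_i=(1-\sigma)L_{i-1}$ once leadership reverses. Your factorization of $f(T)-T$ and the contraction bound $g(T)<(1-\sigma)T$ supply exactly the details the paper leaves implicit (including why the orbit cannot re-cross $\tfrac12$); the only caveat is that your case-two claim that the Market Leader Condition is ``retained throughout'' verifies only $T_i>\tfrac12$, whereas the reserve ratio $L_i=h(T_i)<T_i$ tends to $(T^{\ast}-\sigma)/(1-\sigma)$, which falls below $\tfrac12$ whenever $T^{\ast}<(1+\sigma)/2$ --- but this is precisely the persistence assumption the paper's proof also makes without justification, so it is not a gap relative to the paper.
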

\begin{proof}
  First point follows observing that by monotonicity $\exists k>0$ where the
  Market Leader Condition becomes reversed and updating the Traders Allocation
  Rule for $i \geq k$ with $T_i = (1-\sigma)L_{i-1}$.  Second and third point
  follow from assuming Market Leader Condition doesn't change, applying both
  Allocation Rules and noting $T_i \in [0,1]$.
\end{proof}

\section{Stackelberg attack}

Having established the basic equilibrium condition for such AMM forks where the
leader imposes a protocol fee, let's now include the governance of the AMM
(Governance) as a player who can change the AMM protocol fee at any block.

We assume for the sake of the argument that Governance is profit maximizing and
that the protocol fee is entirely directed to itself as a payoff. Note that
Governance would usually be comprised of governance token holders.

As a new starting point, we consider only one AMM without forks yet.

Intuitively, it appears that, as long as LPs' preferences are entirely
represented by their LP Allocation Game utility (notably, if they don't own
governance tokens), then they would be ready to fork the AMM and start using the
feeless fork, any time when it becomes more profitable to do so.

We assume that LPs can commit to new strategies, including deploying smart
contracts that may allocate reserves on their behalf. In line with
\cite{GThBlockchain}, \cite{landis2023stackelberg}, we observe
Stackelberg equilibria emerging from smart contract moves.

We will derive the Grim Forker, a simple smart contract that allows LPs to
commit to fork the AMM and lock their reserves there, whenever the fee is higher
than a defined threshold. This aims at forcing Governance to not move fees higher
than the threshold.

Let's further assume for simplicity that deployment and interaction of
such smart contracts incurs negligible costs to LPs.

\subsection{Governance player}

\begin{definition}[Governance Fee Setting Game]
  Using notation from the Block Allocation Game, a Governed AMM
  $\mathcal{A}_{gov}$'s protocol fee is denoted by a sequence evolving through
  time $(\phi_{gov})_{i\geq0}$.

  The Governance player's action space consists of updating $(\phi_{gov})_i$ at
  any block $i$ with a value in $[0, 1-\gamma)$.

  Governance's payoff at each block $i$ is given by
  $$
  (u_g)_i = (\phi_{gov})_i (V_{gov})_i
  $$

  And its utility is represented by the expected value of payoffs

  $$
  U_g = \sum_{i=i_0}^{\infty}(\phi_{gov})_i (V_{gov})_i\eta^{-i}
  $$

  with $\eta$ a discount factor in $(0,1)$.
\end{definition}

\subsection{Grim Forker contract}

Let's describe Grim Forker, a smart contract which constitutes a Stackelberg
attack in the sense of \cite{landis2023stackelberg}: it enables LPs to commit to
a specific strategy based on some parameters, with the goal of influencing
Governance to reduce the fee as much as possible.

The proposed contract consist of:

\begin{itemize}
  \item a vault (in line with \cite{erc4626}) which acts as a LP itself on
    $\mathcal{A}_{gov}$ and which funds are kept available for withdrawal before
    the fork, but are locked after the fork happens,
  \item a clause to fork the AMM when some key conditions are met, notably
    depending on a maximum fee value $\phi_{threshold}$ which is a parameter of
    the contract.
\end{itemize}

Note that the locking of the funds after the fork aims at producing a large
enough penalty on Governance expected earnings. The duration of this lock can be
modulated as a parameter as well.

Let's describe the forking pseudocode, which is going to be run at each block:

\begin{algorithm}
\caption{Forking Routine}
\begin{algorithmic}
  \IF{$\phi_{gov} > \phi_{threshold}$ \AND $R_{GrimForker} > R_{gov}/2$}
    \STATE Deploy fork of AMM
    \STATE Withdraw reserves from AMM
    \STATE Allocate reserves to fork
    \STATE Prevent withdrawal (for some period)
\ENDIF
\end{algorithmic}
\end{algorithm}

with $R_{GrimForker}$ the amount of reserves that have been allocated to this
contract.

We can define a new, adapted, allocation game to account for the fork.

\begin{definition}[Fork Block Allocation Game]
  Given an AMM $\mathcal{A}_{gov}$ and a its newly forked $\mathcal{A}_{fork}$,
  a Fork Block Allocation Game is constructed by sub-games in the following order: 

\begin{itemize}
  \item Grim Forker LPs Auto-Allocation: execution of the forking routine,
    allocating all $R_{GrimForker}$ to $\mathcal{A}_{fork}$,
  \item Block Traders Allocation Game,
  \item Block LPs Allocation Game.
\end{itemize}
\end{definition}

From this point, we consider the Fork Block Allocation Game in place of the
Block Allocation Game. All results about the Block Allocation Game still apply
as long as the forking clause is not met.

\subsection{Stackelberg equilibrium}

We can now describe the metagame that happens between LPs and Governance where
LPs add the deployment of specific Grim Forker instances to their action space.

\begin{definition}[Grim Forker Game]
  Players are defined by
  \begin{itemize}
    \item Governance, whose utility is $U_g$,
    \item LPs who participate in Grim Forker, or GFLPs.
  \end{itemize}

  The game happens in the following steps:
  \begin{itemize}
    \item GFLPs deploy a version of Grim Forker with some parameters,
    \item Governance adjusts $\phi_{gov}$ accordingly.
  \end{itemize}
\end{definition}

Assuming that more that half of reserves are allocated to Grim Forker, as
Governance aims at maximizing its utility $U_g$, it is presented with two
choices:

\begin{itemize}
  \item prevent Grim Forker from launching the fork while maximizing $U_g$, thus
    setting $\phi_{gov} = \phi_{threshold}$ and gaining $U_g =
    \sum_{i=i_{fork}}^{\infty}\phi_{threshold} V \eta^{-i}$ (which is equal to
    $\phi_{threshold}V(1-\eta)^{-1}$),
  \item let Grim Forker launch the fork if the clause is met and maximize the
    yield of $U_g = \sum_{i=i_{fork}}^{\infty}(\phi_{gov})_i
    (V_{gov})_i\eta^{-i}$ by playing on $\phi_{gov}$.
\end{itemize}

The second option requires more complex modeling but we know that $(V_{gov})_i$
tends towards 0 in this case thanks to Theorem-\ref{asymp-alloc}. It might still
be the rational option if the discount factor is chosen particularly low as part
of Governance's preferences. Such low $\eta$ could reflect a large proportion of
short-term-minded governance participants, who would thus try to influence
Governance in making the second choice.

We can now derive the conditions for the participation of LPs.

\begin{theorem}[Single Grim Forker Participation]\label{single-grim-forker-particip}
  It is ex-interim individual rational for LPs to allocate their reserves to a
  single Grim Forker if
  \begin{itemize}
    \item they don't have conflicting interests with Governance: their entire
      preferences are represented by $u_l$,
    \item they assume that Governance will prefer preventing the fork.
  \end{itemize}
\end{theorem}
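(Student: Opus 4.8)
The plan is to verify the participation (individual rationality) constraint directly: I would compare the LP's interim expected utility under its two available actions --- depositing its reserves into the single Grim Forker vault versus allocating the same reserves directly to $\mathcal{A}_{gov}$ --- and show that under the two stated hypotheses the former is at least as large as the latter, which is exactly the ex-interim IR requirement.

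First I would fix the LP's interim information. By the first hypothesis the LP holds no governance tokens, so its entire preference is captured by the Small LP Utility $u_l$; by the second hypothesis the LP believes Governance will prefer to prevent the fork. In the Grim Forker Game, preventing the fork means Governance sets $\phi_{gov} = \phi_{threshold}$, the largest fee that does not satisfy the forking clause while still maximizing $U_g$, so along the LP's believed path of play $\phi_{gov} \leq \phi_{threshold}$ at every block.

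Next I would observe that this belief makes the guard $\phi_{gov} > \phi_{threshold}$ of the forking routine false at every block, so the fork never triggers. By construction the vault keeps funds available for withdrawal before the fork and locks them only after a fork happens; hence, as long as the fork has not happened, the deposited reserves act as an ordinary LP position on $\mathcal{A}_{gov}$, earning exactly the returns that define $u_l$ at the prevailing fee. The realized return from depositing therefore equals the return from allocating the same reserves directly at the same fee, and the locking penalty --- which would reduce utility only on the fork event --- drops out of the interim expectation because that event carries zero probability under the LP's belief. Since deployment and interaction costs are assumed negligible, the two actions yield the same interim expected $u_l$, so the individual rationality constraint holds.

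The main obstacle is handling the ex-interim expectation cleanly: the entire argument rests on the second hypothesis assigning probability zero to the fork actually being triggered, which is precisely what neutralizes the locking penalty; absent that belief the locked-reserves term would have to be weighed against the benefit of the lower capped fee. A secondary subtlety is that the resulting inequality is only weak --- a single small LP is generically non-pivotal for the majority condition $R_{GrimForker} > R_{gov}/2$ and could free-ride on other LPs' deposits --- so I would frame the conclusion as weak individual rationality, noting that participation is sustained by the \emph{absence} of any downside (no locking, negligible cost) rather than by a strict individual gain, with strict rationality recovered only for an LP whose deposit is pivotal for the majority condition.
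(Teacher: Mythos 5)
Your proposal is correct and its core coincides with the paper's own argument: both proofs rest on the observation that, as long as the fork is not triggered, reserves in the Grim Forker vault earn exactly the same $u_l$-returns as a direct allocation to $\mathcal{A}_{gov}$, and that the negligible-cost assumption removes any wedge between the two actions. Where you diverge is on the paper's second step. The paper asserts a positive reason to participate --- ``LPs would prefer allocating their reserves through Grim Forker to keep the protocol fee below $\phi_{threshold}$'' --- which implicitly treats the fee cap as a benefit accruing to the participating LP. You instead make the interim expectation explicit (the second hypothesis assigns probability zero to the fork, so the locking penalty vanishes from the expectation) and then point out, correctly, that a single small LP is generically non-pivotal for the condition $R_{GrimForker} > R_{gov}/2$, so the fee-cap benefit cannot generate a \emph{strict} individual preference; you therefore claim only weak ex-interim individual rationality, with strictness recovered for pivotal deposits. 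This is a more careful reading than the paper's: the theorem statement only requires individual rationality, which your weak inequality delivers, whereas the paper's strict-preference claim tacitly assumes either pivotality or coordinated participation and glosses over the free-riding problem you identify. In short, your route proves the stated result under visibly weaker reasoning, at the cost of conceding that participation is knife-edge indifferent for non-pivotal LPs --- a point the paper leaves implicit and which its market-sourced-equilibrium discussion would actually need to address.
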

\begin{proof}
  First, the equivalent amount of reserves allocated in Grim Forker or directly
  in the AMM yields the same returns. Hence, leveraging the assumption that
  transaction costs are negligible, it is indifferent for LPs to allocate to the
  AMM directly or to Grim Forker as long as the forking is not triggered.

  Second, LPs would prefer allocating their reserves through Grim Forker to keep
  the protocol fee below $\phi_{threshold}$.
\end{proof}

Further refinements to the model could include refining preferences for LPs who
are also part of Governance, adding more costs and risks (e.g., development,
legal), and accounting for delays.

\subsection{Market-sourced equilibrium condition}

Let's restrict our analysis to LPs who have no conflicting interests with
Governance. Further, let's assume that they all prefer preventing the fork from
happening, which is reasonable assuming that most external factors not
considered here (impact on traders' perception and brand recognition) induce a
high cost of forking to LPs.

Under these assumptions, Theorem-\ref{single-grim-forker-particip} applies.
Deploying a Grim Forker contract with a given $\phi_{threshold}$ will attract
LPs who believe that Governance won't let the fork happen.

We suggest that a market mechanism can aggregate LPs' preferences on setting the
$\phi_{threshold}$ that optimizes their aggregate utility under some risk
assumptions. Letting different Grim Forker contracts compete would be suboptimal
because of liquidity fragmentation. This suggests that a bidding mechanism could
be used within the Grim Forker contract, where LPs can set the minimum
$\phi_{threshold}$ they are ready to allocate their reserves to. We leave it to
further work to establish such a mechanism.

As long as bids and reserve allocations are observable, such a mechanism would
produce a signal allowing Governance to adjust the protocol fee to prevent a
fork from happening.

\section{Conclusion}

We have produced a dynamic, block-by-block model for traders' volume allocation
and LPs' reserves allocation among two AMM forks. We have established asymptotic
closed-form expressions for reserves and trade volume allocation ratios.
Assuming reasonable real-world constraints on the protocol fee rate, this result
points towards an indisputable natural monopoly of the market-leading AMM among
its forks, thus removing any benefit of fork-based competition for LPs who are
at risk of losing part or all of their revenue to Governance.

We introduced a Grim Forker contract as a Stackelberg attack by LPs on
Governance and shown that this creates a new equilibrium, giving bargaining
power to LPs. Depending on the participation rate in the contract, this attack
can force Governance into reducing protocol fees, otherwise risking seeing the
original AMM drained of all its reserves.

In turn, Grim Forker contracts, by leaving their participation rate openly
observable, instruct Governance on how to manage their protocol fee optimally.
Conversely, introducing obfuscation techniques as in \cite{darkdaos,
austgen2023dao} would hinder observability and change the game dynamics,
requiring a different modeling treatment.

\subsection{Further work}

While this model is restricted to CPMMs, its utility can be extended to further
types of AMMs. We suggest that, as long as trader and LP utility functions have
similar Taylor expansions, similar results can be derived. Further work could
include a generalization to Constant Function Market Makers
\cite{Angeris2020ImprovedPO}.

An important improvement to make the model more realistic would be to model LPs
as having non-negative governance token holdings. Also, specifying an actual
bidding mechanism to aggregate $\phi_{threshold}$ preferences from LPs would
make this model closer to being implementable.

We observe that the game consisting of, on the one hand, individual LPs deciding
how to allocate their funds between an AMM and its feeless fork and, on the
other hand, Governance deciding on the protocol fee level, is not
\textit{Stackelberg resilient} under \cite{Landis2023WhichGA}. We believe that
the key arguments are contained in the current paper, but further formalization
would be required to fully prove this statement. A valuable follow-up question
to ask is whether there is a protocol-fee mechanism that is Stackelberg
resilient and, if so, how such a mechanism is characterized.

Recent efforts \cite{Sun2023CooperativeAV} have been produced to consider new
equilibria of games involving AI agents leveraging commitment devices, notably
Stackelberg attacks. Analyses of Stackelberg attacks like this paper—and,
ideally, deriving Stackelberg resilient alternatives—appear necessary to produce
future-proof protocol-fee mechanisms that will sustain AI agents as players
while fulfilling mechanism objectives.

\textit{Acknowledgments.} We thank Vaughn McKenzie-Landell for support, Daji
Landis for key feedback on the initial idea, and all reviewers for their
valuable feedback.


\begin{thebibliography}{20}
\providecommand{\natexlab}[1]{#1}
\providecommand{\url}[1]{\texttt{#1}}
\expandafter\ifx\csname urlstyle\endcsname\relax
  \providecommand{\doi}[1]{doi: #1}\else
  \providecommand{\doi}{doi: \begingroup \urlstyle{rm}\Url}\fi

\bibitem[Adams et~al.(2021)Adams, Zinsmeister, Salem, Keefer, and Robinson]{UniswapV3}
H.~Adams, N.~Zinsmeister, M.~Salem, R.~Keefer, and D.~Robinson.
\newblock Uniswap v3 core, 2021.
\newblock URL \url{https://uniswap.org/whitepaper-v3.pdf}.

\bibitem[Angeris and Chitra(2020)]{Angeris2020ImprovedPO}
G.~Angeris and T.~Chitra.
\newblock Improved price oracles: Constant function market makers.
\newblock \emph{Proceedings of the 2nd ACM Conference on Advances in Financial Technologies}, 2020.
\newblock URL \url{https://api.semanticscholar.org/CorpusID:214611887}.

\bibitem[Angeris et~al.(2021)Angeris, Kao, Chiang, Noyes, and Chitra]{angeris2021analysis}
G.~Angeris, H.-T. Kao, R.~Chiang, C.~Noyes, and T.~Chitra.
\newblock An analysis of uniswap markets, 2021.

\bibitem[Austgen et~al.(2023)Austgen, Fábrega, Allen, Babel, Kelkar, and Juels]{austgen2023dao}
J.~Austgen, A.~Fábrega, S.~Allen, K.~Babel, M.~Kelkar, and A.~Juels.
\newblock {DAO} decentralization: Voting-bloc entropy, bribery, and dark {DAOs}, 2023.

\bibitem[Daian et~al.(2018)Daian, Kell, Miers, and Juels]{darkdaos}
P.~Daian, T.~Kell, I.~Miers, and A.~Juels.
\newblock On-chain vote buying and the rise of dark {DAOs}, 2018.
\newblock URL \url{https://hackingdistributed.com/2018/07/02/on-chain-vote-buying/}.

\bibitem[Delaunay et~al.(2023)Delaunay, Frambot, Garchery, and Lesbre]{morphoblue}
M.~G. Delaunay, P.~Frambot, Q.~Garchery, and M.~Lesbre.
\newblock Morpho blue whitepaper, 2023.
\newblock URL \url{https://github.com/morpho-org/morpho-blue/blob/main/morpho-blue-whitepaper.pdf}.

\bibitem[Griffith(2019)]{VirgilGriffith}
V.~Griffith.
\newblock Ethereum is game-changing technology, literally., 2019.
\newblock URL \url{https://medium.com/@virgilgr/ethereum-is-game-changing-technology-literally-d67e01a01cf8}.

\bibitem[Hall-Andersen and Schwartzbach(2021)]{GThBlockchain}
M.~Hall-Andersen and N.~I. Schwartzbach.
\newblock Game theory on the blockchain: A model for games with smart contracts.
\newblock In I.~Caragiannis and K.~A. Hansen, editors, \emph{Algorithmic Game Theory}, pages 156--170, Cham, 2021. Springer International Publishing.
\newblock ISBN 9783-03-08594-7-3.

\bibitem[Huo et~al.(2023)Huo, Klages-Mundt, Minca, Münter, and Wind]{Huo_2023}
L.~Huo, A.~Klages-Mundt, A.~Minca, F.~C. Münter, and M.~R. Wind.
\newblock \emph{Decentralized Governance of Stablecoins with Closed Form Valuation}, page 59–73.
\newblock Springer International Publishing, 2023.
\newblock ISBN 9783-031-18679-0.

\bibitem[Kalai et~al.(2010)Kalai, Kalai, Lehrer, and Samet]{CommitmentFolk}
A.~T. Kalai, E.~Kalai, E.~Lehrer, and D.~Samet.
\newblock A commitment folk theorem.
\newblock \emph{Games Econ. Behav.}, 69:\penalty0 127--137, 2010.
\newblock URL \url{https://api.semanticscholar.org/CorpusID:9444523}.

\bibitem[Landis and Schwartzbach(2023{\natexlab{a}})]{Landis2023WhichGA}
D.~Landis and N.~I. Schwartzbach.
\newblock Which games are unaffected by absolute commitments?
\newblock In \emph{Adaptive Agents and Multi-Agent Systems}, 2023{\natexlab{a}}.
\newblock URL \url{https://api.semanticscholar.org/CorpusID:258557548}.

\bibitem[Landis and Schwartzbach(2023{\natexlab{b}})]{landis2023stackelberg}
D.~Landis and N.~I. Schwartzbach.
\newblock Stackelberg attacks on auctions and blockchain transaction fee mechanisms, 2023{\natexlab{b}}.

\bibitem[Ramirez et~al.(2023)Ramirez, Kolumbus, Nagel, Wolpert, and Jost]{Ramirez2023GameM}
M.~A. Ramirez, Y.~Kolumbus, R.~Nagel, D.~H. Wolpert, and J.~Jost.
\newblock Game manipulators - the strategic implications of binding contracts.
\newblock \emph{ArXiv}, abs/2311.10586, 2023.
\newblock URL \url{https://api.semanticscholar.org/CorpusID:265281305}.

\bibitem[Santoro et~al.(2021)Santoro, t11s, Jadeja, Cañada, and Doggo]{erc4626}
J.~Santoro, t11s, J.~Jadeja, A.~C. Cañada, and S.~Doggo.
\newblock {ERC-4626}: Tokenized vaults, 2021.
\newblock URL \url{https://eips.ethereum.org/EIPS/eip-4626}.

\bibitem[Sun et~al.(2023)Sun, Crapis, Stephenson, Monnot, Thiery, and Passerat-Palmbach]{Sun2023CooperativeAV}
X.~Sun, D.~Crapis, M.~Stephenson, B.~Monnot, T.~Thiery, and J.~Passerat-Palmbach.
\newblock Cooperative {AI} via decentralized commitment devices.
\newblock \emph{ArXiv}, abs/2311.07815, 2023.
\newblock URL \url{https://api.semanticscholar.org/CorpusID:265157983}.

\bibitem[Uniswap({\natexlab{a}})]{gfxuniswap}
Uniswap.
\newblock Uniswap governance forum: Making protocol fees operational, 2023{\natexlab{a}}.
\newblock URL \url{https://gov.uniswap.org/t/making-protocol-fees-operational/21198}.

\bibitem[Uniswap({\natexlab{b}})]{UniswapFeeProposal}
Uniswap.
\newblock Uniswap governance forum: [temperature check] - activate uniswap protocol governance, 2024{\natexlab{b}}.
\newblock URL \url{https://gov.uniswap.org/t/temperature-check-activate-uniswap-protocol-governance/22936}.

\bibitem[Velner et~al.(2017)Velner, Teutsch, and Luu]{cryptoeprint:2017/230}
Y.~Velner, J.~Teutsch, and L.~Luu.
\newblock Smart contracts make bitcoin mining pools vulnerable.
\newblock Cryptology ePrint Archive, Paper 2017/230, 2017.
\newblock URL \url{https://eprint.iacr.org/2017/230}.

\bibitem[von Stackelberg(1934)]{VonStackelberg}
H.~von Stackelberg.
\newblock Marktform und gleichgewicht.
\newblock \emph{Verlag von Julius Springer}, 1934.

\bibitem[Wolpert and Bono(2009)]{GameMining}
D.~Wolpert and J.~Bono.
\newblock Game mining: How to make money from those about to play a game.
\newblock \emph{Advances in Austrian Economics}, 18, 09 2009.
\newblock \doi{10.2139/ssrn.2243767}.

\end{thebibliography}
\end{document}